\def\BibTeX{{\rm B\kern-.05em{\sc i\kern-.025em b}\kern-.08em
    T\kern-.1667em\lower.7ex\hbox{E}\kern-.125emX}}
\newtheorem{thm}{Theorem }%
\newtheorem{proposition}{Proposition}%
\newtheorem{cor}{Corollary}%
\newtheorem{lemma}{Lemma}%
\newtheorem{defn}{Definition}%
\newtheorem{rem}{Remark }%
\newtheorem{example}{Example}[section]
\newcommand{\set}[1]{\left\{#1\right\}}
\newcommand{\ra}{\rightarrow}
\newcommand{\Real}{\mathbb{R}}
\newcommand{\eps}{\varepsilon}
\renewcommand{\subset}{\subseteq}
\newcommand{\C}{\mathcal{C}}
\title{\LARGE \bf
Computing Control Lyapunov-Barrier Functions:\\ Softmax Relaxation and Smooth Patching with Formal Guarantees
}
\author{Jun Liu and Maxwell Fitzsimmons  %
\thanks{This research was supported in part by the Natural Sciences and Engineering Research Council of Canada and the Canada Research Chairs program.}%
\thanks{Jun Liu and Maxwell Fitzsimmons are with the Department of Applied Mathematics, University of Waterloo, Waterloo, Ontario N2L 3G1, Canada.  Email: \texttt{j.liu@uwaterloo.ca (Jun Liu)}
        }%
}
\begin{document}

\maketitle
\thispagestyle{empty}
\pagestyle{empty}

\begin{abstract}
We present a computational framework for synthesizing a single smooth Lyapunov function that certifies both asymptotic stability and safety. We show that the existence of a strictly compatible pair of control barrier and control Lyapunov functions (CBF-CLF) guarantees the existence of such a function on the exact safe set certified by the barrier. To maximize the certifiable safe domain while retaining differentiability, we employ a log-sum-exp (softmax) relaxation of the nonsmooth maximum barrier, together with a counterexample-guided refinement that inserts half-space cuts until a strict barrier condition is verifiable. We then patch the softmax barrier with a CLF via an explicit smooth bump construction, which is always feasible under the strict compatibility condition. All conditions are formally verified using a satisfiability modulo theories (SMT) solver, enabled by a reformulation of Farkas’ lemma for encoding strict compatibility. On benchmark systems, including a power converter, we show that the certified safe stabilization regions obtained with the proposed approach are often less conservative than those achieved by state-of-the-art sum-of-squares (SOS) compatible CBF-CLF designs.
\end{abstract}
\begin{keywords}
Safety; Stability; Formal verification; Control Lyapunov function; Control barrier function; Smooth patching. 
\end{keywords}

\section{Introduction}

In many control applications, stability alone is no longer sufficient; formal safety guarantees are equally necessary. Feedback controllers must not only stabilize the system but also enforce state constraints, a requirement that is especially critical in domains such as autonomous vehicles, industrial processes, and robotics, where safe operation is essential for real-world deployment.

Control Lyapunov functions (CLFs) provide a classical framework for stabilizing nonlinear systems, with necessary and sufficient conditions for their existence established in \cite{artstein1983stabilization} and later exploited to derive a universal control law for control-affine nonlinear systems \cite{sontag1989universal}. Similarly, control barrier functions (CBFs) \cite{wieland2007constructive,ames2016control,xu2015robustness} enforce safety by constraining the Lie derivative of a barrier function, but they alone do not guarantee stability. This gap has motivated increasing interest in safe stabilization and reach-avoid formulations that combine safety with stability or reachability, as studied in works such as \cite{romdlony2016stabilization,meng2022smooth,mestres2025conversetheoremscertificatessafety,li2024stabilization,dawson2022safe,li2023graphical,ong2019universal,mestres2022optimization,quartz2025converse}.

A widely used strategy is to couple CBFs and CLFs within an optimization-based framework. For control-affine systems, this is typically formulated as a quadratic program (QP) \cite{ames2016control,ames2019control}, which can be solved efficiently and deployed online, for instance through model predictive control (MPC) \cite{wu2019control}. The main challenge lies in guaranteeing compatibility between the two functions so that the optimization remains feasible. When compatibility is absent, stability is often relaxed and treated as a soft constraint \cite{ames2016control,li2023graphical,mestres2022optimization}. A further limitation is that such optimization-based formulations may introduce undesired dynamics such as additional equilibrium points \cite{mestres2022optimization}, and the provable region of attraction, if any, is often considerably smaller than the certified safe region.

An alternative approach is to merge a CLF and a CBF into a single control Lyapunov-barrier function (CLBF) \cite{romdlony2016stabilization}. Once such a function is available, Sontag's universal formula \cite{sontag1989universal} (or QP with guaranteed feasibility) can be applied to construct a safe stabilizing controller. While conceptually appealing, the drawback---as highlighted in \cite{braun2017existence,braun2020comment} and further discussed in \cite{meng2022smooth,mestres2025conversetheoremscertificatessafety}---is that the CLBF conditions of \cite{romdlony2016stabilization} generally fail to hold unless stringent assumptions on the safe set are satisfied. The difficulty stems from topological obstructions: a system with state constraints, whether imposed explicitly by safety requirements or implicitly by limited controllability, may not admit a continuous stabilizing feedback. We refer the reader to the expository work \cite{sontag1999stability} for a discussion on the necessity of discontinuous feedback in nonlinear stabilization.  

Converse Lyapunov theorems for safety \cite{liu2020converse} and for stability under safety constraints \cite{meng2022smooth} are useful for theoretically understanding when a single Lyapunov function can certify both stability and safety. The work in \cite{liu2020converse} provided a characterization of robust safety using Lyapunov functions. In \cite{meng2022smooth}, a general theoretical framework was established that connects Lyapunov and barrier functions through converse results, ensuring stability together with safety and accommodating reach-avoid-stay specifications. This framework has also been extended to cover hybrid systems \cite{meng2023lyapunov} and stochastic systems \cite{meng2024stochastic}. Building on \cite{meng2022smooth}, 
the recent work  \cite{mestres2025conversetheoremscertificatessafety} broadened the scope by deriving necessary conditions for the existence of CLBFs and for the compatibility of CBF-CLF pairs. Complementary to these results, the converse Lyapunov theorem developed in \cite{quartz2025converse} shows that a strictly compatible pair of control Lyapunov and control barrier functions exists if and only if there is a single smooth Lyapunov function that simultaneously certifies asymptotic stability and safety. 

Inspired by the work above, and in particular \cite{quartz2025converse}, the present paper develops a computational framework to unify strictly compatible CBFs and CLFs into a single CLBF for safe stabilization. Our first motivation is to enlarge the safe stabilization region compared with current approaches based on compatible CBF-CLF pairs \cite{li2023graphical,dai2024verification}. To this end, we propose a natural log-sum-exp (softmax) relaxation of a maximum barrier function, obtained from sublevel safety constraints, as a candidate CBF. Surprisingly, this simple choice often yields barrier functions that are formally verifiable. When verification fails, we introduce a counterexample-guided synthesis procedure that iteratively inserts half-space cuts until a verifiable barrier is obtained. We find that this significantly extends the safe stabilization region achieved by existing CBF-CLF synthesis methods \cite{li2023graphical,dai2024verification}, as it does not restrict barrier functions to a fixed template and instead allows them to adapt to the geometry of the safe set. We then search for a compatible CLF and prove theoretically that, under a strict compatibility condition, it is always possible to patch the compatible CBF-CLF pair into a single smooth CLBF. In contrast to methods that require compatibility over larger domains, our approach only imposes strict compatibility on the boundary of the safe set, making the search for compatible CLFs more tractable. Through a series of benchmark examples, we demonstrate that the proposed method outperforms alternative approaches, including sum-of-squares (SOS) synthesis of compatible CBF-CLF pairs.

\section{Preliminaries and problem formulation}

\subsection{System description}

Consider a nonlinear system in control-affine form
\begin{equation}
    \label{eq:sys}
    \dot x = f(x) + g(x)u,
\end{equation}
where $f:\,\mathbb{R}^n \to \mathbb{R}^n$ and $g:\,\mathbb{R}^n \to \mathbb{R}^{n \times m}$ with $f(0)=0$. 
We aim to design a state-feedback controller $u = \kappa(x)$, where $\kappa:\,\mathbb{R}^n \to \mathbb{R}^m$ and $\kappa(0)=0$, such that the origin remains an equilibrium point of the closed-loop system
\begin{equation}
    \label{eq:clsys}
    \dot x = f(x) + g(x)\kappa(x),
\end{equation}
and solutions of (\ref{eq:clsys}) satisfy additional properties. In this paper, we focus on asymptotic stabilization under state constraints. 

Given a control input $u:[0,\infty) \to \mathbb{R}^m$ and functions $f$ and $g$ satisfying mild conditions, for example $u$ is locally essentially bounded and $f$, $g$ are locally Lipschitz, system~(\ref{eq:sys}) admits a unique solution $\phi(t; x, u)$ on its maximal interval of existence. Here, $\phi(t; x, u)$ denotes the unique trajectory of~(\ref{eq:clsys}) starting from $x(0)=x$ under the control input $u$. We call an input signal admissible if a solution exists for it.  

\subsection{Safety constraints}

Consider multiple state constraints of the form
\begin{equation}
\label{eq:constraint}
\mathcal{C}_i = \left\{ x \in \mathbb{R}^n \mid h_i(x) \le 1 \right\}, \quad i = 1, \ldots, N,
\end{equation}
where each $h_i:,\mathbb{R}^n \to \mathbb{R}$ is continuously differentiable. We assume that the safe set is defined as the intersection of the sets above:
\begin{equation}
\label{eq:safe_set}
\mathcal{C}_{\text{safe}}=\bigcap_{i=1}^N\mathcal{C}_i.
\end{equation}
Equivalently, the safe set can be written as
\begin{equation}
\label{eq:hmax}
\mathcal{C}_{\text{safe}} = \left\{ x \in \mathbb{R}^n \mid h_{\max}(x) \le 1 \right\},
\end{equation}
where $h_{\max}(x) := \max_{1 \le i \le N} h_i(x)$.
Note that $h_{\max}$ is generally not continuously differentiable.

Safety is enforced by requiring that solutions of the closed-loop system~(\ref{eq:clsys}) remain in the set $\mathcal{C}_{\text{safe}}$ whenever they start in $\mathcal{C}_{\text{safe}}$.  
Additionally, we impose asymptotic stability. Assuming that the origin lies in the interior of $\mathcal{C}_{\text{safe}}$, we seek a feedback controller $u = \kappa(x)$ such that the origin is asymptotically stable for the closed-loop system~(\ref{eq:clsys}), and solutions starting in $\mathcal{C}_{\text{safe}}$ remain in $\mathcal{C}_{\text{safe}}$ and converge to the origin as $t \to \infty$.

This is not always possible unless the set $\mathcal{C}_{\text{safe}}$ is a controlled forward invariant set for~(\ref{eq:sys}) and is contained within the domain of null controllability of~(\ref{eq:sys}).  

\begin{defn}
A set $\mathcal{C} \subset \mathbb{R}^n$ is called controlled forward invariant for system~(\ref{eq:sys}) if for every $x \in \mathcal{C}$ there exists an admissible control input $u$ such that $\phi(t; x, u)$ is defined for all $t \ge 0$ and satisfies $\phi(t; x, u) \in \mathcal{C}$ for all $t \ge 0$.
\end{defn}

\begin{defn}
The domain of null controllability for~(\ref{eq:sys}) is defined as  
\begin{equation*}
\mathcal{D} := \left\{ x \in \mathbb{R}^n \,\middle|\, \exists\, u \text{ such that } \phi(t; x, u) \to 0 
\text{ as } t \to \infty \right\}.
\end{equation*}
\end{defn}

\subsection{Problem formulation}\label{sec:problem}

We seek to find a tight under-approximation of $\mathcal{C}_{\text{safe}}$, denoted by $\mathcal{C}$, that is represented as the 1-sublevel set of a continuously differentiable function $W:\,\Real^n\ra\Real$, i.e.,
\begin{equation}
    \label{eq:set_C}
    \mathcal{C}=\set{x\in\mathcal \Real^n \mid W(x)\le 1}
\end{equation}
Furthermore, we require that $W$ serves as a control Lyapunov function for~(\ref{eq:sys}) on $\C$; that is,  $W$ is positive definite and 
\begin{equation}
\inf_{u \in \mathbb{R}^m} [L_f W(x) + L_g W(x)u] < 0,  
\quad \forall x \in \mathcal{C} \setminus \{0\},
\end{equation}
where $L_f W = \nabla W^\top f $ and $L_g W =  \nabla W^\top g$. As a result, standard results from nonlinear control (e.g., Sontag's universal formula~\cite{sontag1989universal}) apply directly to $W$ to obtain a safe, stabilizing controller\textemdash one of the main benefits of having a single smooth Lyapunov function. These results ensure we can construct a feedback controller that is continuous everywhere except possibly at the origin. 

\begin{rem}
We take a unifying approach that employs a single Lyapunov function to certify both safety and stability, in contrast to the common approach of using a separate barrier function and a Lyapunov function~\cite{ames2016control}.  
It has been shown in the literature that, under mild assumptions, this can be done without loss of generality; see, for example, \cite{liu2020converse,meng2022smooth,mestres2025conversetheoremscertificatessafety,quartz2025converse} for characterizations of safety and stability using converse Lyapunov functions.  
In this work, we focus on demonstrating the computational feasibility and benefits of unifying barrier and Lyapunov functions. Despite being merely a control Lyapunov function, $W$ is also referred to as a control Lyapunov-barrier function (CLBF) to emphasize its dual role in simultaneously certifying stability and safety, in line with recent work on unified Lyapunov-barrier certificates \cite{romdlony2016stabilization,meng2022smooth,mestres2025conversetheoremscertificatessafety,quartz2025converse}. 
\end{rem}

\section{Softmax relaxation of the safe set}\label{sec:softmax}

Often, one of the objectives of a safe stabilizing controller is to maximize the safe domain of attraction,  
namely the set of initial conditions that converge to the origin while remaining in the safe set $\mathcal{C}_{\text{safe}}$.  
A natural choice is to design a barrier function whose $1$-sublevel set coincides with $\mathcal{C}_{\text{safe}}$.  
Indeed, $h_{\max}$ in~(\ref{eq:hmax}) provides an exact characterization of $\mathcal{C}_{\text{safe}}$ via its $1$-sublevel set.  
A potential drawback of using $h_{\max}$ directly, however, is that it is generally not differentiable.

\subsection{Softmax relaxation of the safe set}

We approximate $h_{\max}$ by the smooth log-sum-exp with temperature $\tau>0$:
\begin{equation}
\label{eq:softmax}
h_{\mathrm{sm}}(x;\tau)\;:=\;\frac{1}{\tau}\,\log\!\Big(\sum_{i=1}^N e^{\tau\,h_i(x)}\Big).
\end{equation}
We have the following well-known property of $h_{\mathrm{sm}}$. The proof is elementary, but we include it for completeness.  
\begin{proposition}\label{prop:softmax}
For all $x \in \mathbb{R}^n$,
\[
h_{\max}(x) \;\le\; h_{\mathrm{sm}}(x;\tau) \;\le\; h_{\max}(x) + \frac{\log N}{\tau}.
\]
\end{proposition}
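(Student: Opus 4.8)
The plan is to exploit the fact that the log-sum-exp is a smooth interpolation between the maximum and the maximum-plus-a-constant, and that both bounds follow from elementary two-sided estimates on the exponential sum combined with the monotonicity of the logarithm. Fix $x \in \mathbb{R}^n$ and abbreviate $M := h_{\max}(x) = \max_{1 \le i \le N} h_i(x)$. The entire argument reduces to sandwiching the inner sum $S := \sum_{i=1}^N e^{\tau h_i(x)}$ between two multiples of $e^{\tau M}$ and then applying the increasing transformation $t \mapsto \frac{1}{\tau}\log t$, which is order-preserving precisely because $\tau > 0$.

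For the lower bound, I would observe that at least one index $i^\star$ achieves $h_{i^\star}(x) = M$, so that $e^{\tau M}$ is one of the (strictly positive) terms in $S$; discarding the remaining nonnegative terms gives $S \ge e^{\tau M}$. Applying $\frac{1}{\tau}\log(\cdot)$ to both sides and using $\frac{1}{\tau}\log e^{\tau M} = M$ yields $h_{\mathrm{sm}}(x;\tau) \ge M = h_{\max}(x)$. For the upper bound, I would use that every term satisfies $h_i(x) \le M$, hence $e^{\tau h_i(x)} \le e^{\tau M}$, and summing the $N$ terms gives $S \le N e^{\tau M}$. Applying the same monotone transformation and expanding $\frac{1}{\tau}\log\!\bigl(N e^{\tau M}\bigr) = M + \frac{\log N}{\tau}$ produces the stated upper bound $h_{\mathrm{sm}}(x;\tau) \le h_{\max}(x) + \frac{\log N}{\tau}$.

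Honestly, there is no genuine obstacle here: the result is a textbook log-sum-exp inequality, and the only points requiring the slightest care are recording that $\tau > 0$ keeps the inequalities pointing the same way after dividing by $\tau$, and that $\log$ is strictly increasing so the bounds on $S$ transfer directly to bounds on $h_{\mathrm{sm}}$. The whole proof is two short inequality chains, one for each side, and needs no case analysis beyond identifying the maximizing index for the lower estimate.
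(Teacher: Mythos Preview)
Your proposal is correct and follows essentially the same approach as the paper: fix $x$, bound the exponential sum between $e^{\tau M}$ and $N e^{\tau M}$, then apply the increasing map $t \mapsto \tfrac{1}{\tau}\log t$. The only cosmetic difference is that the paper introduces the shorthand $a_i := h_i(x)$, but the argument is identical.
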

\begin{proof}
Fix $x\in\Real^n$. Let $a_i := h_i(x)$ for $i=1,\dots,N$ and $a_{\max} := \max_i a_i$.  
Then
\[
\sum_{i=1}^N e^{\tau a_i} 
\;\ge\; e^{\tau a_{\max}},
\qquad
\sum_{i=1}^N e^{\tau a_i}
\;\le\; N e^{\tau a_{\max}},
\]
since $e^{\tau a_i} \le e^{\tau a_{\max}}$ for all $i$.  
Taking logarithms and dividing by $\tau>0$ gives
\[
a_{\max} 
\;\le\; \frac{1}{\tau}\log\!\Bigl(\sum_{i=1}^N e^{\tau a_i}\Bigr)
\;\le\; a_{\max} + \frac{\log N}{\tau},
\]
which is precisely the desired inequality.
\end{proof}
An immediate consequence of Proposition \ref{prop:softmax} is that 
\begin{equation}
\{h_{\mathrm{sm}}\le 1\}\subseteq\{h_{\max}\le 1\}=\mathcal{C}_{\text{safe}}.
\end{equation}
In other words, the 1-sublevel set of $h_{\mathrm{sm}}$ gives a guaranteed under-approximation of the safe set, and, as $\tau\to\infty$, $h_{\mathrm{sm}}\to h_{\max}$ uniformly on compact sets.

\subsection{Control barrier condition}

Consider a continuously differentiable function $h: \mathbb{R}^n \rightarrow \mathbb{R}$. Denote 
\begin{align} 
\mathcal{C} & = \left\{ x \in \mathbb{R}^n \mid h(x) \le 1 \right\}, \label{safe-set}\\
\partial \mathcal{C} & = \left\{ x \in \mathbb{R}^n \mid h(x) = 1 \right\}.
\end{align}
We rely on the following strict barrier condition on $h$ to ensure controlled invariance of the set $\mathcal C$.
\begin{defn}
We say that $h$ is a {strict control barrier function} on $\mathcal C$ if, for every $x\in \partial \mathcal C$, there exists $u\in\Real^m$ such that
\begin{equation}\label{eq:cbf}
    \inf_{u \in \mathbb{R}^m} [L_f h(x) + L_g h(x) u] < 0.
\end{equation}
\end{defn}

\begin{example}\label{ex:2d_toy}
We consider the following toy example taken from~\cite{dai2024verification} to illustrate the
construction of a near-maximal smooth barrier function and its verification.

\paragraph{System dynamics}
Let $x = (x_1,x_2)^\top$ and consider the control-affine system (\ref{eq:sys}) with
\[
    f(x) = \begin{bmatrix} 0 \\ -\sin x_1 \end{bmatrix},
    \quad
    g(x) = \begin{bmatrix} 1 \\ -1 \end{bmatrix},
\]
and domain $[-\pi,\pi] \times [-3,4]$.

\paragraph{Hard constraints}
The safe set is defined by 
\[
    h_1(x) := -\sin x_1 - \cos x_1 - x_2 \;\le 1,
\]
together with box constraints encoding the domain:
\[
    h_{2j}(x) := 1 + x_j - \bar x_j \le 1, \qquad
    h_{2j+1}(x) := 1 - x_j + \underline x_j \le 1,
\]
for $j=1,2,$ where $\underline x_j$ and $\bar x_j$ denote the lower and upper bounds of the $j$th coordinate.

\paragraph{Verified smooth barrier}
Let $h_1,\dots,h_N$ ($N=5$) be all constraints above. We construct a
smooth \emph{log-sum-exp} approximation $h_{\mathrm{sm}}(x;\tau)$ as defined in (
\ref{eq:softmax}). For a wide range of $\tau$ values, we can formally verify (details to provide in Section \ref{sec:verification}) the strict barrier condition (\ref{eq:cbf}). Fig. \ref{fig:softmax-barrier} depicts formally verified $h_{\mathrm{sm}}(x;\tau)$ for $\tau=1.5$ and $\tau=4.5$. It clearly shows that increasing $\tau$ leads to a closer approximation of the exact safe set boundary. As a comparison, we also depict a state-of-the-art approach that synthesizes compatible Lyapunov and barrier functions using sum-of-squares (SOS) as a comparison. 
Of course, at the moment, we have only discussed the barrier condition. With the results to be presented in Section~\ref{sec:patched_clbf}, we will be able to produce a formally verified single smooth Lyapunov-barrier function that achieves the exact same safe stabilization region enclosed by the softmax barrier.

\begin{figure}[h!]
    \centering
    \includegraphics[width=\linewidth]{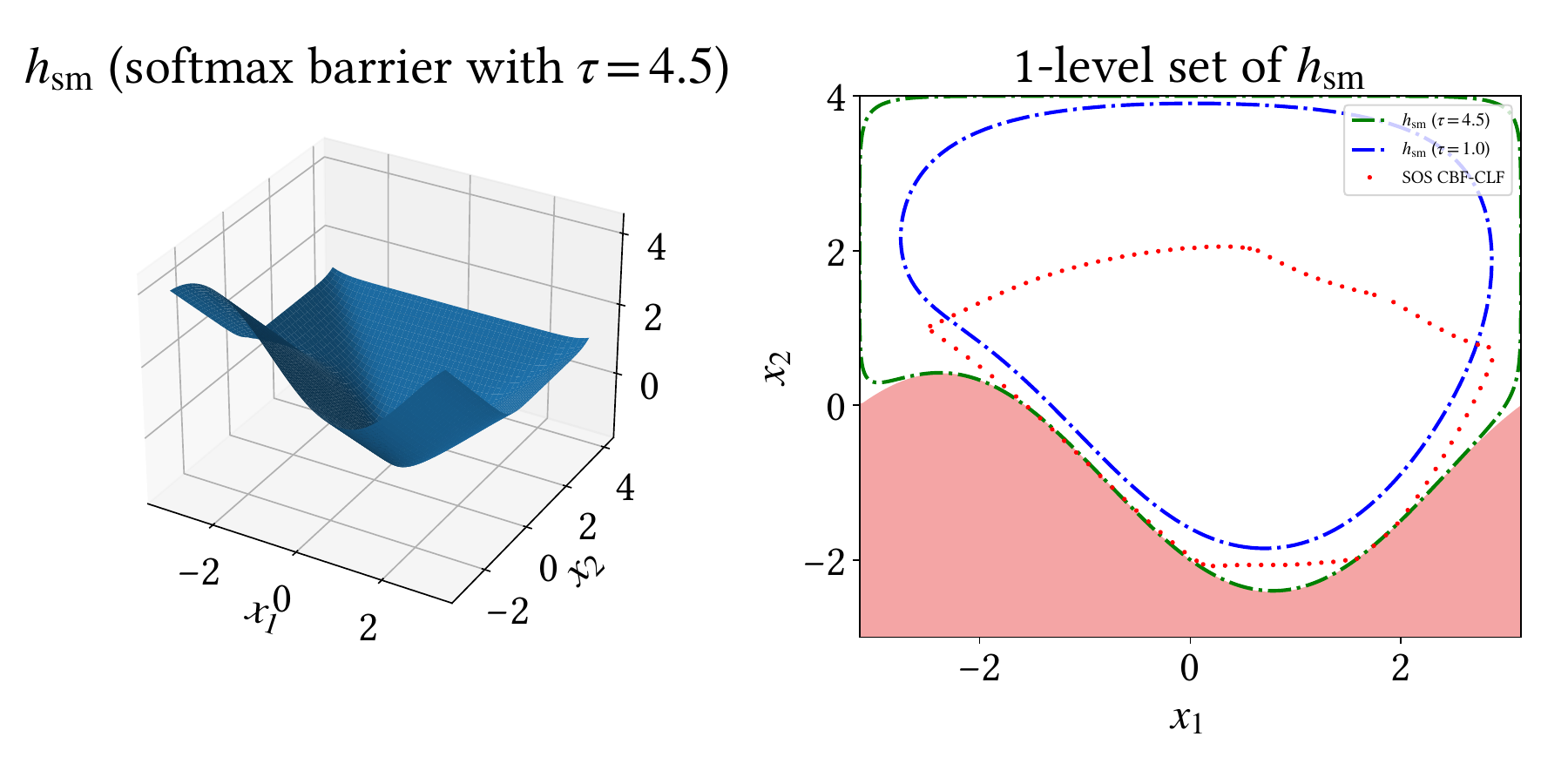}
    \caption{Smooth barrier function $h_{\mathrm{sm}}(x)$ (left) and its $1$-level set (right). As a comparison, the best verified safe controllable region obtained using SOS CBF-CLF is shown in dotted red. It can be seen that the softmax relaxation barrier outperforms the SOS approach \cite{dai2024verification}. Data for comparisons in this and subsequent examples were extracted from published figures using the WebPlotDigitizer tool~\cite{marin2017webplotdigitizer}.}
    \label{fig:softmax-barrier}
\end{figure}

\end{example}

\subsection{Counterexample-guided refinement}

While in many cases, as we shall see in Section~\ref{sec:examples}, the
softmax relaxation barriers readily satisfy the strict barrier condition,
there are scenarios in which this is not the case. 
We present a counterexample-guided refinement procedure that iteratively
adds new half-space constraints to the softmax approximation until the 
barrier condition becomes verifiable.

\paragraph{Construction of half-space cuts}
Suppose that $h_{\mathrm{sm}}(x;\tau)$ fails the barrier condition at some
counterexample $x^\ast\in\partial \mathcal C$ provided by the verifier.
Let 
\[
n := \frac{\nabla h_{\mathrm{sm}}(x^\ast;\tau)}{\|\nabla h_{\mathrm{sm}}(x^\ast;\tau)\|}
\]
denote the outward unit normal of the current softmax level set at $x^\ast$.
To exclude this point, we generate a new half-space
\begin{equation}
\label{eq:new-halfspace}
h_{\mathrm{new}}(x) := n_{\mathrm{new}}^\top x - b_{\mathrm{new}} + 1,
\qquad
h_{\mathrm{new}}(x) \le 1,
\end{equation}
where $n_{\mathrm{new}}$ is a slight rotation of $n$:
\[
n_{\mathrm{new}}
:= n\cos\theta + r\sin\theta,
\qquad
r \perp n,\ \|r\|=1,
\]
with a small angle $\theta > 0$. 
The plane is shifted inward by a small margin $\varepsilon>0$,
$
b_{\mathrm{new}} := n_{\mathrm{new}}^\top x^\ast - \varepsilon,
$ 
so that $x^\ast$ lies strictly outside the updated safe set.
The refined constraint set is then
\[
\{h_1(x)\le 1,\dots,h_N(x)\le 1,h_{\mathrm{new}}(x)\le 1\},
\]
and a new softmax $h_{\mathrm{sm}}(x;\tau)$ is built from this set.

\paragraph{Algorithm}
The procedure repeats this step until either the barrier condition is 
verified or a maximum number of cuts is reached.
This is summarized in Algorithm~\ref{alg:cegis-softmax}.

\begin{algorithm}[h!]
\caption{Counterexample-Guided Softmax Barrier Refinement}
\label{alg:cegis-softmax}
\KwIn{Initial constraints $\{h_i(x)\le 1\}_{i=1}^N$, temperature $\tau>0$}
Build $h_{\mathrm{sm}}(x;\tau)$ via (\ref{eq:softmax})\;
\While{Verifier finds counterexample $x^\ast$ and iteration count $<k_{\max}$}{
    Compute normal $n \gets \nabla h_{\mathrm{sm}}(x^\ast;\tau)/\|\nabla h_{\mathrm{sm}}(x^\ast;\tau)\|$\;
    Choose small angle $\theta>0$ and orthogonal direction $r \perp n$\;
    Set $n_{\mathrm{new}}\gets n\cos\theta + r\sin\theta$\;
    Set $b_{\mathrm{new}}\gets n_{\mathrm{new}}^\top x^\ast - \varepsilon$\;
    Add new half-space constraint $h_{\mathrm{new}}(x) := n_{\mathrm{new}}^\top x - b_{\mathrm{new}} + 1$\;
    Rebuild $h_{\mathrm{sm}}(x;\tau)$ with augmented constraint set\;
}
\KwOut{Refined $h_{\mathrm{sm}}(x;\tau)$ satisfying barrier condition}
\end{algorithm}

\begin{example}\label{ex:2d_nonlinear}
We consider the nonlinear control-affine system (\ref{eq:sys}), taken from \cite{dai2024verificationv1}, with 
\[
f(x)=\begin{bmatrix} 0 \\ -x_1 + x_1^3/6 \end{bmatrix},
\quad
g(x)=\begin{bmatrix} 1 \\ -1 \end{bmatrix},
\]
and domain $[-4.5,4.5]^2$. The unsafe set is given by the single half-space
$h_1(x):= -2 - x_1 - x_2 \le 1$, together with the box constraints
defining the domain. We construct a softmax relaxation $h_{\mathrm{sm}}(x;\tau)$ with $\tau=3.0$. Direct verification of the control barrier condition on $h_{\mathrm{sm}}=1$
fails, but applying the counterexample-guided refinement procedure from
Algorithm~\ref{alg:cegis-softmax} adds a sequence of half-space cuts.
Upon termination, the refined softmax barrier is formally verified, and its
$1$-sublevel set provides a guaranteed controlled invariant subset of the
original safe set. Fig. \ref{fig:ex_refined_barrier} depicts the verified refined softmax barrier for this example.    
\begin{figure}[h!]
    \centering
    \includegraphics[width=\linewidth]{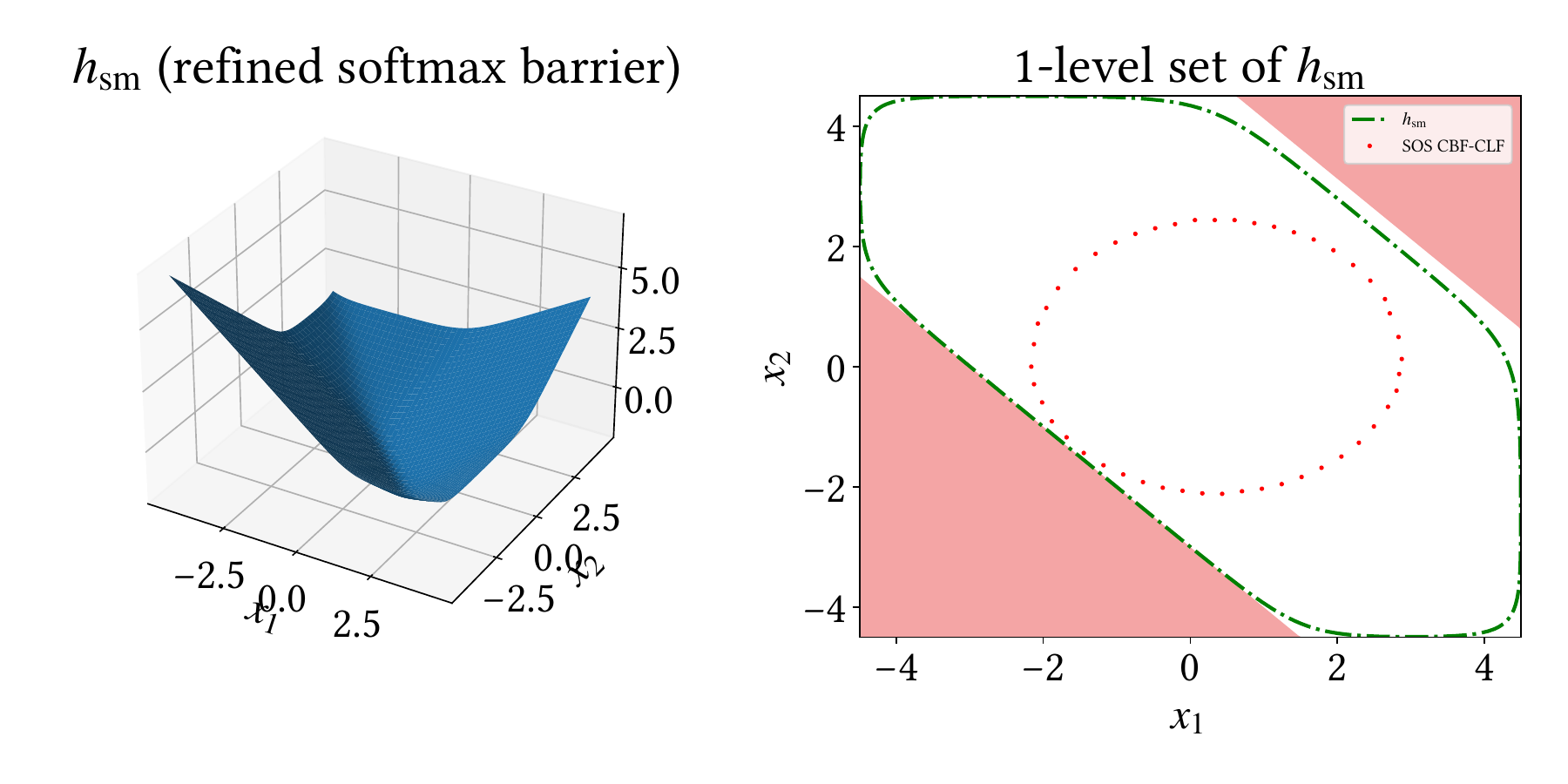}
    \caption{Smooth barrier function $h_{\mathrm{sm}}(x)$ (left) and its $1$-level set (right) obtained by 
softmax relaxation and counterexample-guided refinement (Algorithm \ref{alg:cegis-softmax}. As a comparison, the best verified safe controllable region obtained using SOS CBF-CLF is shown in dotted red. The softmax relaxation barrier outperforms the SOS approach \cite{dai2024verificationv1}.}
    \label{fig:ex_refined_barrier}
\end{figure}
\end{example}

\section{Strict compatibility implies smooth patching}
\label{sec:patched_clbf}

In the previous section, we demonstrated that the softmax relaxation can yield a significantly less conservative barrier for safety. To ensure safe stabilization, however, we must combine it with a Lyapunov function for stabilization, posed as the main problem in Section~\ref{sec:problem}. We address this issue in the present section. The theoretical foundation is a converse Lyapunov theorem for joint safety and stability recently proved in~\cite{quartz2025converse}. Our focus here is on efficient and verifiable computation. The proposed approach uses smooth patching with formal guarantees and relies on satisfiability modulo theories (SMT) verification and an analytical formula to patch strictly compatible control barrier and Lyapunov functions. We present this as the main result of the paper.

\subsection{Strictly compatible CBF and CLF}

Consider a candidate CBF $h:\,\Real^n\ra\Real$ and a candidate CLF  
$V:\,\Real^n\ra\Real$. We assume that both $h$ and $V$ are continuously differentiable, $V$ is positive definite, and the set  $\mathcal{C}$ defined by (\ref{safe-set}) contains the origin in its interior. 

\begin{defn}\label{def:compatibility}
    We say that $h$ and $V$ are strictly compatible if the following conditions hold: 
    \begin{enumerate}
        \item For every $x \in \mathcal{C} \setminus \{0\}$, there exists $u\in\Real^m$ such that  
    \begin{equation}\label{eq:CLFV}
        L_f V(x) + L_g V(x) u < 0.
    \end{equation}
        \item For every $x \in \partial \C$, there exists $u\in\Real^m$ such that 
    \begin{equation}\label{eq:CLFBoundary}
        L_f V(x) + L_g V(x) u < 0,
    \end{equation}
    and
    \begin{equation}\label{eq:CBF}
        L_f h(x)+L_g h(x) u < 0.
    \end{equation}
    \end{enumerate}
\end{defn}

\begin{rem}
We impose the strict form of inequality~\eqref{eq:CBF} to ensure robust invariance of $\partial\mathcal{C}$. This is essential for it to be verifiable by $\delta$-complete SMT solvers such as dReal \cite{gao2013dreal}. We also relax the compatibility conditions, see, e.g., \cite{mestres2022optimization,ong2019universal,dai2024verification}, to hold only on the boundary of the set \(\C\).
\end{rem}

\subsection{Smooth patching with guarantees}

We establish the following result on patching a strictly compatible CBF and CLF pair to form a single smooth Lyapunov-barrier function (CLBF). 

\begin{thm}\label{thm:patch}
    Let $h$ and $V$ be strictly compatible as defined in Definition \ref{def:compatibility}. Suppose that the set $\C$ is compact. Then there exists a continuously differentiable function $W:\,\Real^n\ra\Real$ with the following properties:
    \begin{enumerate}
        \item $\mathcal{C}=\set{x\in\mathcal \Real^n \mid W(x)\le 1}$.
        \item For every $x \in \mathcal{C} \setminus \{0\}$, there exists $u\in\Real^m$ such that  
    \begin{equation}\label{eq:CLFW}
        L_f W(x) + L_g W(x) u < 0.
    \end{equation}
    \end{enumerate}
\end{thm}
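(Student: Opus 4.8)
The plan is to give an explicit ``bump'' construction that blends a rescaled copy of $V$, used deep inside $\mathcal{C}$, with the barrier $h$, used near $\partial\mathcal{C}$, via a smooth transition driven by $h$ itself. Concretely, I would fix a smooth nondecreasing profile $\sigma:\Real\ra[0,1]$ with $\sigma(s)=0$ for $s\le h_0$ and $\sigma(s)=1$ for $s\ge h_1$, where $h(0)<h_0<h_1<1$ are to be chosen, set $\tilde V:=\alpha V$ for a small $\alpha>0$, and define
\[
W(x):=\bigl(1-\sigma(h(x))\bigr)\,\tilde V(x)+\sigma(h(x))\,h(x).
\]
Since $h,V\in C^1$ and $\sigma\in C^\infty$, $W$ is automatically $C^1$ on $\Real^n$, so no regularity is lost, and the choice $h(0)<h_0$ guarantees $W\equiv\tilde V$ near the origin, hence $W(0)=0$.

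For Property (1) I would avoid any topological argument and instead prove the pointwise equivalence $W(x)\le1\iff h(x)\le1$, which immediately yields $\{W\le1\}=\{h\le1\}=\mathcal{C}$. The key quantitative input is to rescale $V$: choosing $\alpha$ so small that $\tilde V\le\alpha\max_{\mathcal{C}}V<h_0<1$ on the compact set $\mathcal{C}$ forces $W=\tilde V<1$ wherever $h\le h_0$, a strict convex combination $W<1$ wherever $h_0<h<1$, the exact value $W=h=1$ on $\partial\mathcal{C}$ (as $\sigma(1)=1$), and $W=h>1$ outside $\mathcal{C}$.

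For Property (2) I would differentiate, writing $\nabla W=(1-\sigma)\nabla\tilde V+\rho\,\nabla h$ with $\rho:=\sigma+\sigma'(h)\,(h-\tilde V)$, so that for any $u$,
\[
L_f W+L_g W\,u=(1-\sigma)\,\alpha\,(L_f V+L_g V\,u)+\rho\,(L_f h+L_g h\,u).
\]
The rescaling pays off again: since $\sigma'$ is supported on $\{h_0<h<h_1\}$, where $h>h_0>\alpha\max_{\mathcal{C}}V\ge\tilde V$, we get $h-\tilde V>0$ there, hence both coefficients $1-\sigma\ge0$ and $\rho\ge0$, and they are never simultaneously zero. Deep inside, where $h\le h_0$ and $\nabla W=\nabla\tilde V$, the control from the CLF condition on $V$ (Definition~\ref{def:compatibility}, item~(1)) makes the expression negative. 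Near the boundary I would invoke item~(2): the coefficient maps $x\mapsto L_f V(x),L_g V(x),L_f h(x),L_g h(x)$ are continuous, so for each boundary point the fixed compatible $u$ keeps both strict inequalities valid on a neighborhood; a finite subcover of the compact set $\partial\mathcal{C}$ produces a collar $U$ on which, for every $x$, a \emph{single} $u$ satisfies both $L_f V+L_g V\,u<0$ and $L_f h+L_g h\,u<0$. Taking $h_0$ close enough to $1$ that $\{h_0\le h\le1\}\subseteq U$ (a standard nested-compactness argument), this $u$ drives $L_f W+L_g W\,u<0$, since each term above is then $\le0$ with at least one $<0$.

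The main obstacle is precisely this near-boundary interior collar: strict compatibility only supplies a barrier-type inequality \emph{on} $\partial\mathcal{C}$, yet the level-set requirement forces $\nabla W$ to be parallel to $\nabla h$ along $\partial\mathcal{C}$, so the $\nabla h$ contribution cannot be made negligible and the CLF of $V$ alone does not control the sign of $L_f h+L_g h\,u$. The resolution is the collar/compactness argument furnishing one common compatible control just inside the boundary, paired with the sign control $h-\tilde V>0$ that keeps $\nabla W$ a nonnegative combination of $\nabla\tilde V$ and $\nabla h$; everything else reduces to a routine case check. As a byproduct the construction also gives $W>0$ on $\mathcal{C}\setminus\{0\}$, so Sontag's formula applies directly to $W$.
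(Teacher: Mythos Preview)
Your proposal is correct and follows essentially the same route as the paper: define $W$ as a convex combination of a rescaled $V$ and $h$, with the blending weight a smooth function of $h$; use the rescaling to force $h-\tilde V\ge 0$ on the transition band so that $\nabla W$ is a nonnegative combination of $\nabla\tilde V$ and $\nabla h$; and use compactness of $\partial\mathcal{C}$ to push the boundary compatibility to an interior collar. The only cosmetic differences are that the paper writes down an explicit exponential bump supported on $\{1-\varepsilon\le h\le 1\}$ (so $b=1$ exactly at $h=1$), whereas you take a generic smooth profile with $h_1<1$ (so $W\equiv h$ on a thin shell near the boundary); and the paper states the collar existence in one line while you spell out the finite-subcover plus nested-compactness argument more explicitly.
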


\begin{proof}
Fix $\varepsilon \in (0,1)$ and define the inner boundary band 
\[
\partial\mathcal C_{\varepsilon}^{-} := \{x : 1-\varepsilon \le h(x) \le 1\}.
\]
By strict compatibility and compactness of $\C$, we can choose $\eps>0$ such that, for every $x \in \partial\mathcal C_{\varepsilon}^{-}$, there exists $u$ such that both
$L_f V(x) + L_g V(x) u < 0$ and $L_f h(x) + L_g h(x) u < 0$. We also assume that $\eps>0$ is sufficiently small such that $0\not\in \partial\mathcal C_{\varepsilon}^{-}$. 

Since $\mathcal C$ is compact, we can choose $\alpha > 0$ such that $V_2 := \alpha V$ satisfies
\[
\quad 
V_2(x) \le 1 \ \forall x \in \mathcal C,
\qquad
V_2(x) \le h(x) \ \forall x \in \partial\mathcal C_{\varepsilon}^{-}.
\]
Indeed, we can simply choose 
$$
\alpha = \frac{1-\eps}{\max_{x\in \C} V(x)}.
$$

Define a $C^1$ bump function supported on the band:
\[
b(x)=
\begin{cases}
\exp\!\Bigl(-\dfrac{1}{\varepsilon^2 - (h(x)-1)^2} + \dfrac{1}{\varepsilon^2}\Bigr), & 
 1-\varepsilon < h(x) < 1,\\[4pt]
1, & h(x) \ge 1,\\
0, & h(x) \le 1-\varepsilon.
\end{cases}
\]  
Now define the patched function
\begin{equation}\label{eq:patched_W}
W(x) := (1-b(x)) V_2(x) + b(x) h(x),
\end{equation}
which is $C^1$ by construction, with gradient
\begin{equation}\label{eq:grad_W}
\nabla W = b \nabla h + (1-b)\nabla V_2 + (h - V_2)\nabla b.    
\end{equation}

We verify the following properties of $W$. 

\emph{Property (1): Exact safe set.}  
If $h > 1$, then $b=1$ and $W=h > 1$. This implies 
$\{W \le 1\} \subset \{h \le 1\}$. 

If $h \le 1-\varepsilon$, then $b=0$ and $W=V_2 \le 1$.  
If $1-\varepsilon<h<1$, then $0<b<1$ and
$W=(1-b)V_2 + b h \le 1$ since $V_2 \le 1$ on $\mathcal C$ and $h \le 1$ in the band. These together imply 
$\{h \le 1\} \subset \{W \le 1\}$. 

Thus $\{W \le 1\} = \{h \le 1\} = \mathcal C$.

\emph{Property (2): Strict CLF condition.} Following (\ref{eq:grad_W}), we further compute
\[
\nabla b(x) =
\begin{cases}
p(x)\,\nabla h(x), & \quad \text{if } 1-\varepsilon < h(x) < 1, \\[6pt]
0, & \quad \text{if }  h(x) \ge 1 \text{ or } h(x) \le 1-\varepsilon,
\end{cases}
\]
where
$$
p(x) =  \exp\!\Bigl(-\dfrac{1}{\varepsilon^2 - (h-1)^2} + \dfrac{1}{\varepsilon^2}\Bigr)
\frac{2(1-h)}{\bigl(\varepsilon^2-(h-1)^2\bigr)^2}.
$$
Take $x \in \mathcal C \setminus \{0\}$.
\begin{itemize}
\item If $h(x) \le 1-\varepsilon$, then $b=0$, $\nabla b=0$, 
and $\nabla W = \nabla V_2 = \alpha \nabla V$. The CLF condition (\ref{eq:CLFW}) holds for $W$ because 
$V$ satisfies the CLF condition (\ref{eq:CLFV}).
\item If $1-\varepsilon < h(x) \le 1$, then
\[
\nabla W = \bigl(b+(h - V_2)p\bigr)\nabla h + (1-b)\nabla V_2,
\]
where $p\ge 0$ and $h - V_2 \ge 0$. By strict compatibility of $h$ and $V$, for each $x$,  
there exists $u$ such that $\nabla h^\top (f+gu)<0$ and $\nabla V_2^\top (f+gu)<0$ simultaneously. This implies $\nabla W^\top (f+gu)< 0$.
\end{itemize}
Therefore, for every $x \in \mathcal C \setminus \{0\}$, there exists $u$ such that
$L_f W(x) + L_g W(x) u < 0$.
\end{proof}

The following corollary is a direct application of Theorem~\ref{thm:patch} 
and is a well-established result in nonlinear control.

\begin{cor}\label{cor:controller}
    Under the assumptions of Theorem \ref{thm:patch}, there exists a feedback controller $\kappa:\,\Real^n\ra\Real$ such that the origin is asymptotically stable for the closed-loop system~(\ref{eq:clsys}), and solutions starting in $\mathcal{C}$ remain in $\mathcal{C}$ and converge to the origin as $t \to \infty$. Furthermore, $\kappa$ can be chosen to be smooth on $\Real^n\setminus\set{0}$. 
\end{cor}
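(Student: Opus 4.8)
The plan is to reduce the corollary to a textbook application of Sontag's universal formula to the single control Lyapunov–barrier function $W$ delivered by Theorem~\ref{thm:patch}. First I would collect the three facts the construction supplies: $W$ is $C^1$, $\C=\{W\le 1\}$, and the pointwise control-Lyapunov inequality~\eqref{eq:CLFW} holds on $\C\setminus\{0\}$. I would then record that $W$ is positive definite on $\C$: since $\varepsilon$ is chosen so that $0\notin\partial\C_{\varepsilon}^{-}$, there is a neighborhood of the origin on which $b\equiv 0$ and hence $W=V_2=\alpha V$, so $W$ inherits $W(0)=0$ and $W>0$ for $x\neq 0$ from the positive definiteness of $V$; in the band and below one checks $W=(1-b)V_2+bh>0$ for $x\neq 0$ as well. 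In particular $\nabla W(0)=0$.

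Second, I would set $a(x):=L_fW(x)$ and $b(x):=L_gW(x)^\top$ and feed this pair into Sontag's universal formula~\cite{sontag1989universal} to define $\kappa$. Property~\eqref{eq:CLFW} is exactly the statement that $a(x)<0$ whenever $b(x)=0$ for $x\in\C\setminus\{0\}$, which is precisely the nondegeneracy that makes the formula well defined away from the origin and guarantees $L_fW(x)+L_gW(x)\kappa(x)<0$ on $\C\setminus\{0\}$; the convention $b=0\Rightarrow\kappa=0$ together with $\nabla W(0)=0$ gives $\kappa(0)=0$. The feedback is as regular as the map $x\mapsto(a(x),b(x))$ on the set where the CLF condition holds, since Sontag's function is (real-)analytic in $(a,b)$ there.

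Third, I would close the loop with the standard Lyapunov and invariance argument. With the continuous feedback $\kappa$, any solution of~\eqref{eq:clsys} starting in $\C$ satisfies $\tfrac{d}{dt}W(\phi(t))=L_fW+L_gW\kappa<0$ as long as it remains in $\C\setminus\{0\}$; hence $W$ is nonincreasing along trajectories and $W(\phi(t))\le W(\phi(0))\le 1$, so $\phi(t)\in\C$ for all $t\ge 0$, giving forward invariance of $\C$. Compactness of $\C$ then precludes finite escape, so solutions exist for all $t\ge 0$, and since $W$ is positive definite with $\dot W<0$ on the compact set $\C\setminus\{0\}$, Lyapunov's theorem yields asymptotic stability of the origin with $\C$ contained in its region of attraction, i.e.\ convergence of every trajectory from $\C$ to $0$.

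The main obstacle I anticipate is the regularity claim rather than the stability/invariance bookkeeping. Theorem~\ref{thm:patch} only guarantees $W\in C^1$, so $(a,b)$ is merely continuous and Sontag's formula a priori returns a feedback that is continuous---not smooth---on $\Real^n\setminus\{0\}$. To legitimately assert smoothness one must either strengthen the regularity of $W$ away from the two level surfaces $h=1-\varepsilon$ and $h=1$ (when $f$, $g$, and $h$ are smooth the formula is smooth wherever $\nabla W$ is), or read ``smooth'' in the weaker sense made explicit in Section~\ref{sec:problem}, namely continuous on $\Real^n$ and as regular as the data off the origin. I would make this caveat precise and, if full smoothness is genuinely required, mollify or re-patch $W$ so that it is $C^{\infty}$ on $\C\setminus\partial\C$ before invoking the formula.
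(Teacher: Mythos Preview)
Your approach is exactly what the paper intends: it offers no detailed proof and simply calls the corollary a direct application of Theorem~\ref{thm:patch} together with Sontag's universal formula (see the remark immediately following the corollary). Your caveat about regularity is well taken---the patched $W$ is indeed only $C^1$ across the level set $h=1$ (the bump $b$ has nonzero one-sided second derivative there), so the unqualified claim that $\kappa$ is smooth on $\Real^n\setminus\{0\}$ is not fully justified by the construction as stated, and the paper does not address this either.
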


\begin{rem}
If $V$ satisfies the small control property~\cite{sontag1989universal}, 
then so does $W$ constructed in~\eqref{eq:patched_W}. 
Consequently, the feedback law $\kappa$ in the corollary is continuous at the origin. 
Moreover, $\kappa$ can be explicitly constructed using Sontag's universal 
stabilization formula~\cite{sontag1989universal}.
\end{rem}

\section{SMT verification}\label{sec:verification}

In this section, we discuss formal verification of the strict barrier condition~\eqref{eq:cbf} introduced in 
Section~\ref{sec:softmax} and the strict compatibility between control barrier and Lyapunov 
conditions defined in Section~\ref{sec:patched_clbf}. Once these conditions are verified, 
the function $W$ constructed in Theorem~\ref{thm:patch} is a continuously differentiable 
CLBF that can be used for safe stabilization. 

\subsection{Verification of strict CBF}

The strict CBF condition (\ref{eq:cbf}) is equivalent to 
\begin{equation}
    \label{eq:cbf-verification}
    (L_g h(x) = 0 \wedge h(x) = 1) \Longrightarrow L_f h(x)<0.
\end{equation}
The barrier function $h$ constructed using the softmax relaxation (\ref{eq:softmax}) involves transcendental functions $\log$ and $\exp$. Over a compact domain, condition (\ref{eq:cbf-verification}) can be readily verified by a $\delta$-complete SMT solver such as dReal \cite{gao2013dreal}. 

\subsection{Verification of CBF-CLF compatibility}

The strict compatibility conditions proposed in Definition \ref{def:compatibility} include a CLF condition for $V$ on $\C$, which is equivalent to 
\begin{equation}
    \label{eq:clf-verification}
    (L_g V(x) = 0 \wedge h(x) \le 1) \Longrightarrow L_f V(x)<0.
\end{equation}
Similar to~\eqref{eq:cbf-verification}, this can be readily verified using an SMT solver such as dReal~\cite{gao2013dreal}. The synthesis of CLFs using SMT solvers and neural networks has been discussed in detail in~\cite{liu2025formally}. We refer the reader to~\cite{liu2025formally} for a detailed discussion of the formal verification of the CLF condition~\eqref{eq:clf-verification}, in particular the treatment of difficulties near the origin when using dReal, which is only $\delta$-complete and may return spurious counterexamples in a neighborhood of the origin.

Furthermore, Definition \ref{def:compatibility} includes a strict compatibility condition \eqref{eq:CLFBoundary}--\eqref{eq:CBF} between $V$ and $h$ on the boundary of $\C$. Because this condition has the quantifier structure $\forall x \, \exists u$, it cannot be readily handled by dReal, which supports only quantifier-free or purely universal formulas. We use the following variant of Farkas' lemma \cite[Proposition 6.4.3, p.~90]{matouvsek2007understanding} to addresses this issue.

\begin{lemma}[Farkas' Lemma]\label{lem:farkas}
Let $A \in \Real^{n \times m}$ and $b \in \Real^n$.  
The system $A u \le b$, $u \in \Real^m$, has a solution if and only if every nonnegative 
$y \in \Real^n$ with $y^\top A = 0^\top \in \Real^{1 \times m}$ also satisfies $y^\top b \ge 0$.
\end{lemma}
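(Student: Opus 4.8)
The plan is to prove the two implications separately, with the forward (necessity) direction reducing to a one-line calculation and the reverse (sufficiency) direction carrying all the content via a separation argument.

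For the easy direction, suppose $Au_0 \le b$ for some $u_0 \in \Real^m$, and take any $y \ge 0$ with $y^\top A = 0^\top$. Since $y \ge 0$ and $b - Au_0 \ge 0$ componentwise, the inner product $y^\top(b - Au_0) \ge 0$; because $y^\top A = 0^\top$, this collapses to $y^\top b \ge 0$, which is exactly the asserted condition. No topology is needed here. For the hard direction I would argue the contrapositive: assuming $Au \le b$ is \emph{infeasible}, I will produce a nonnegative $y$ with $y^\top A = 0^\top$ and $y^\top b < 0$, thereby exhibiting a violation of the $y$-condition. The key reformulation is that the system is solvable if and only if $b$ lies in the convex cone $K := \{\,Au + s : u \in \Real^m,\ s \ge 0\,\} = \mathrm{Im}(A) + \Real^n_{\ge 0}$, since $Au \le b$ is equivalent to $b = Au + s$ with slack $s \ge 0$. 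Infeasibility therefore means $b \notin K$.

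Because $K$ is a (polyhedral, hence closed) convex cone, the separating hyperplane theorem yields a vector $y \in \Real^n$ strictly separating the point $b$ from $K$, i.e. $y^\top b < \inf_{z \in K} y^\top z$. Using that $K$ is a cone containing the origin, the infimum cannot be negative (scaling any $z$ with $y^\top z < 0$ by large positive factors would drive it to $-\infty$) and is attained at $z = 0$, so $\inf_{z \in K} y^\top z = 0$; hence $y^\top z \ge 0$ for all $z \in K$ while $y^\top b < 0$. Specializing $z = Au$ and letting $u$ range over all of $\Real^m$ forces $y^\top A = 0^\top$ (otherwise a suitable $u$ makes $y^\top A u$ negative), and specializing $z$ to each coordinate ray $e_i \in \Real^n_{\ge 0}$ gives $y_i \ge 0$. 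This is precisely the infeasibility certificate, completing the contrapositive and hence the theorem.

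The main obstacle is justifying the closedness of $K$, which is exactly what licenses the \emph{strict} separation of the point $b$ from $K$. I would discharge it by invoking the standard fact that a finitely generated cone is closed, applied to the Minkowski sum of the subspace $\mathrm{Im}(A)$ and the finitely generated orthant $\Real^n_{\ge 0}$. If one prefers to sidestep the topology entirely, an alternative is a purely algebraic proof by Fourier--Motzkin elimination: induct on the number of variables, eliminating one $u$-coordinate at a time while tracking the nonnegative multipliers used to combine inequalities, so that an infeasible system ultimately reduces to an inconsistent scalar inequality $0 \le -c$ with $c > 0$ whose accumulated multipliers assemble into the certificate $y$. The separation proof is shorter but leans on the closedness lemma, whereas Fourier--Motzkin is elementary at the cost of more bookkeeping.
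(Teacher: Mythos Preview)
Your proof is correct. The paper does not actually prove this lemma; it is stated with a citation to Matou\v{s}ek--G\"artner and used as a black box, so there is no in-paper argument to compare against directly.

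That said, the paper does prove the strict-inequality reformulation (Lemma~\ref{lem:strictfarkas}) via a separation argument, and the structural contrast is worth noting. There, the reverse direction separates the affine set $\{Au-b:u\in\Real^m\}$ from the open orthant $\Real^n_{<0}$ using the version of the separating hyperplane theorem for two disjoint convex sets; because one of the sets is open, no closedness hypothesis is needed, at the cost of obtaining only a non-strict separating functional (which suffices since the target inequality there is $\lambda^\top b\le 0$). You instead separate the single point $b$ from the closed cone $K=\mathrm{Im}(A)+\Real^n_{\ge 0}$, which yields strict separation directly once closedness of $K$ is established---and you correctly identify this as the one nontrivial ingredient, discharging it via polyhedrality or offering Fourier--Motzkin as a fully elementary fallback. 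Both routes are standard; yours is arguably cleaner for the non-strict statement at hand, while the paper's set-versus-set formulation is tailored to the strict variant it actually needs downstream.
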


We reformulate Farkas' Lemma in a strict inequality form and add a normalization to the vector $y$.

\begin{lemma}[Farkas' Lemma Reformulation]\label{lem:strictfarkas}
Let $A \in \Real^{n \times m}$ and $b \in \Real^n$.  
The system $A u < b$, $u \in \Real^m$, has a solution if and only if every nonnegative 
$\lambda \in \Real^n$ with $\lambda^\top A = 0^\top \in \Real^{1 \times m}$ and  
$\sum_{i=1}^n \lambda_i=1$ also satisfies $\lambda^\top b > 0$.
\end{lemma}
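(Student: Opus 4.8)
The plan is to prove the two implications separately, deriving the strict version entirely from the non-strict Farkas Lemma~\ref{lem:farkas} together with a compactness argument on the probability simplex $\Delta := \{\lambda\in\Real^n : \lambda\ge 0,\ \sum_{i=1}^n\lambda_i=1\}$. The easy (``only if'') direction is a direct computation: assuming $Au_0 < b$ for some $u_0$, take any $\lambda\in\Delta$ with $\lambda^\top A = 0$. Since $\lambda\in\Delta$ we have $\lambda\ge 0$ and $\lambda\neq 0$, so at least one weight is strictly positive; pairing each strict inequality $(Au_0)_i < b_i$ with its nonnegative weight $\lambda_i$ and summing yields $0 = (\lambda^\top A)u_0 = \lambda^\top(Au_0) < \lambda^\top b$, which gives $\lambda^\top b > 0$.

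For the converse (``if'') direction I would argue by contraposition: assume $Au<b$ has no solution and produce a $\lambda\in\Delta$ with $\lambda^\top A = 0$ and $\lambda^\top b \le 0$, contradicting the hypothesis (which forces $\lambda^\top b>0$ for every such $\lambda$). The first step is to convert strict infeasibility into a family of non-strict ones: $Au<b$ is solvable if and only if $Au \le b - \tfrac1k\mathbf 1$ is solvable for some integer $k\ge 1$ (one direction uses the positive margin $\min_i(b_i-(Au_0)_i)$; the other is immediate). Hence infeasibility of $Au<b$ means $Au\le b-\tfrac1k\mathbf 1$ is infeasible for every $k$. Applying Lemma~\ref{lem:farkas} to each perturbed system produces, for every $k$, a vector $y^{(k)}\ge 0$ with $(y^{(k)})^\top A = 0$ and $(y^{(k)})^\top\!\big(b-\tfrac1k\mathbf 1\big)<0$; in particular $y^{(k)}\neq 0$, so $\mathbf 1^\top y^{(k)}>0$ and we may normalize $\hat\lambda^{(k)} := y^{(k)}/(\mathbf 1^\top y^{(k)})\in\Delta$, which satisfies $(\hat\lambda^{(k)})^\top A = 0$ and $(\hat\lambda^{(k)})^\top b < 1/k$.

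The final step extracts the limiting certificate. Because $\Delta$ is compact, the sequence $(\hat\lambda^{(k)})$ has a subsequence converging to some $\lambda^\ast\in\Delta$; passing to the limit in the (continuous) constraints gives $(\lambda^\ast)^\top A = 0$ and $(\lambda^\ast)^\top b \le 0$, the desired contradiction. I expect the main obstacle to be precisely the strict-versus-nonstrict gap: a single application of Lemma~\ref{lem:farkas} only yields a nonnegative certificate, and without control on its magnitude one cannot pass to a limit. The normalization $\sum_i\lambda_i=1$ is exactly what resolves this---it rules out the degenerate $\lambda=0$ and confines the certificates to the compact simplex---so that the perturbation $b-\tfrac1k\mathbf 1$ together with compactness converts the strict feasibility question into a limit of non-strict ones and upgrades the limiting inequality to the required form.
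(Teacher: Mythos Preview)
Your proof is correct. The forward direction matches the paper's argument verbatim. For the converse, however, you take a genuinely different route: the paper invokes the separating hyperplane theorem directly, separating the affine set $\{Au-b:u\in\Real^m\}$ from the open negative orthant $\Real^n_{<0}$ to obtain a nonzero $\tilde\lambda\ge 0$ with $\tilde\lambda^\top A=0$ and $\tilde\lambda^\top b\le 0$, then normalizes onto the simplex. You instead reduce to the already-stated non-strict Farkas Lemma~\ref{lem:farkas} by perturbing $b$ to $b-\tfrac1k\mathbf 1$, extracting a normalized certificate $\hat\lambda^{(k)}\in\Delta$ for each $k$, and using compactness of the simplex to pass to a limit $\lambda^\ast$ with $(\lambda^\ast)^\top b\le 0$. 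Your argument is slightly longer but has the virtue of actually \emph{using} Lemma~\ref{lem:farkas} (the paper states it but then bypasses it in the proof of Lemma~\ref{lem:strictfarkas}); it also makes transparent why the simplex normalization is essential---without it the certificates $y^{(k)}$ could degenerate and no limit would exist. The paper's approach is more self-contained and avoids the sequence extraction, at the cost of appealing to a separation theorem that is morally equivalent to Farkas itself.
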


\begin{proof}
\emph{($\Rightarrow$)} We have 
\[
\lambda^\top b\;=\;\lambda^\top(b-Au)+\underbrace{\lambda^\top A u}_{=\,0}\;=\;\sum_{i=1}^n \lambda_i\,(b_i-a_i^\top u)\;>\;0,
\]
because each $b_i-a_i^\top u>0$ and $\lambda_i\ge 0$ with $\sum_{i=1}^n \lambda_i=1$. 

\emph{($\Leftarrow$)} Suppose no $u$ satisfies $Au<b$. Then for every $u$ there exists an index $i$ such that  
$a_i^\top u \ge b_i$. Equivalently, the convex sets  
\[
\{ (Au-b) \in \Real^n : u \in \Real^m \}, 
\qquad 
\Real^n_{<0}
\]
are disjoint. By the separating hyperplane theorem \cite[Section 2.5.1]{boyd2004convex}, there exists a nonzero vector $\tilde\lambda \in \Real^n$ such that  
\[
\tilde\lambda^\top(Au-b) \;\ge\; 0\quad \forall u\in\Real^m,
\qquad
\tilde\lambda^\top c \;\le\; 0\quad \forall c \in \Real^n_{<0}.
\]
The second condition implies $\tilde\lambda \ge 0$. The first condition implies $\tilde\lambda^\top A=0$ (otherwise one can choose $u$ in the direction 
of $-(\tilde\lambda^\top A)$ to violate the inequality), and then $\tilde\lambda^\top b \le 0$. Set  
\[
\lambda = \frac{\tilde\lambda}{\sum_{i=1}^n \tilde\lambda_i},
\]
so that $\lambda \ge 0$, $\sum_i \lambda_i = 1$, $\lambda^\top A=0$, and $\lambda^\top b \le 0$. This contradicts the assumption that all such $\lambda$ satisfy $\lambda^\top b > 0$. Therefore, $\exists u$ with $Au<b$. 
\end{proof}

We now state the conditions that we can verify for strict compatibility. 

\begin{lemma}\label{lem:farkas-clbf}
The strict compatibility condition \eqref{eq:CLFBoundary}--\eqref{eq:CBF} is equivalent to
\begin{equation}
    \label{eq:farkas}
\begin{aligned}
& ((h(x)=1)  \\
& \wedge (\lambda_1 \ge 0 \wedge \lambda_2 \ge 0 \wedge \lambda_1 + \lambda_2=1) \\
& \wedge (\lambda_1 L_g V(x)+\lambda_2 L_g h(x)=0))\\ 
& \Longrightarrow \lambda_1 L_f V(x) + \lambda_2 L_f h(x) <0.
\end{aligned}
\end{equation}
\end{lemma}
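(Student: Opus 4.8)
The plan is to apply the strict Farkas reformulation of Lemma~\ref{lem:strictfarkas} to the two-inequality system that encodes conditions \eqref{eq:CLFBoundary}--\eqref{eq:CBF} at a fixed boundary point. First I would fix an arbitrary $x\in\partial\C$, so that $h(x)=1$, and observe that the existence of a single $u\in\Real^m$ satisfying both $L_fV(x)+L_gV(x)u<0$ and $L_fh(x)+L_gh(x)u<0$ is exactly a feasibility statement of the form $Au<b$. Concretely, I would set
\[
A = \begin{bmatrix} L_gV(x)\\ L_gh(x)\end{bmatrix}\in\Real^{2\times m},
\qquad
b = \begin{bmatrix} -L_fV(x)\\ -L_fh(x)\end{bmatrix}\in\Real^{2},
\]
so that the row-wise inequalities $a_1^\top u < b_1$ and $a_2^\top u < b_2$ reproduce $L_gV(x)u < -L_fV(x)$ and $L_gh(x)u < -L_fh(x)$, which are precisely \eqref{eq:CLFBoundary} and \eqref{eq:CBF}.

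Next I would invoke Lemma~\ref{lem:strictfarkas} with $n=2$. The lemma says $Au<b$ is solvable if and only if every $\lambda=(\lambda_1,\lambda_2)\in\Real^2$ with $\lambda\ge 0$, $\lambda_1+\lambda_2=1$, and $\lambda^\top A=0$ satisfies $\lambda^\top b>0$. I would then translate each piece of this dual condition back into the notation of the claim: $\lambda^\top A = \lambda_1 L_gV(x)+\lambda_2 L_gh(x)=0$ is the third conjunct in the antecedent of \eqref{eq:farkas}, the sign and normalization constraints $\lambda_1\ge 0\wedge\lambda_2\ge 0\wedge\lambda_1+\lambda_2=1$ form the second conjunct, and $\lambda^\top b>0$ unwinds to $-(\lambda_1 L_fV(x)+\lambda_2 L_fh(x))>0$, i.e. $\lambda_1 L_fV(x)+\lambda_2 L_fh(x)<0$, which is exactly the consequent. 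Reinstating the quantifier over $x\in\partial\C$ via the conjunct $h(x)=1$ then yields the implication displayed in \eqref{eq:farkas}.

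The argument is essentially a direct dictionary translation once the matrix $A$ and vector $b$ are correctly identified, so there is no deep obstacle; the care required is purely bookkeeping. The one step I would be most careful about is the sign convention in defining $b$: the CLF/CBF conditions are written with the drift terms $L_fV$, $L_fh$ on the left, so they must be moved to the right-hand side as $b_i=-L_fV(x)$ and $b_i=-L_fh(x)$ for the form $Au<b$ to match, and this is what flips $\lambda^\top b>0$ into the strict negativity consequent. I would also note explicitly that the equivalence holds pointwise in $x$ and that the normalization $\lambda_1+\lambda_2=1$ from Lemma~\ref{lem:strictfarkas} is what makes the dual condition a bounded, quantifier-friendly formula suitable for an SMT solver, which is the practical motivation for casting it in this form rather than the unnormalized version of Lemma~\ref{lem:farkas}.
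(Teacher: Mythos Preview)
Your proposal is correct and follows essentially the same approach as the paper: define $A=\begin{bmatrix}L_gV(x)\\ L_gh(x)\end{bmatrix}$ and $b=\begin{bmatrix}-L_fV(x)\\ -L_fh(x)\end{bmatrix}$, then apply Lemma~\ref{lem:strictfarkas} pointwise at each $x\in\partial\C$. Your treatment of the sign bookkeeping is in fact more explicit than the paper's, which states the dual condition as $\lambda^\top b(x)<0$ where it should read $\lambda^\top b(x)>0$ to match Lemma~\ref{lem:strictfarkas} and yield the correct consequent.
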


\begin{proof}
Write
\[
A(x)\;=\;\begin{bmatrix}L_g V(x)\\[2pt] L_g h(x)\end{bmatrix}\in\Real^{2\times m},\;
b(x)\;=\;\begin{bmatrix}-L_f V(x)\\[2pt] -L_f h(x)\end{bmatrix}\in\Real^{2}.
\]
Fix any $x\in \partial \C$ and write $\lambda=(\lambda_1,\lambda_2)^\top$. Then the strict compatibility is $\exists u:\ A(x)u<b(x)$ componentwise, and \eqref{eq:farkas} is
\[
\forall\,\lambda\in\Real_{\ge 0}^2,\ \lambda_1+\lambda_2=1,\ \lambda^\top A(x)=0^\top\ :\ \lambda^\top b(x)<0.
\]
Their equivalence is precisely Lemma \ref{lem:strictfarkas}.
\end{proof}

\begin{rem}
We consider a strict inequality formulation and add a simplex normalization to $\lambda$ so that (\ref{eq:farkas}) can be handled by $\delta$-complete SMT solvers such as dReal \cite{gao2013dreal}, which are effective essentially only for strict inequalities over compact domains.
\end{rem}

\begin{rem}
We verify (\ref{eq:cbf-verification}), (\ref{eq:clf-verification}), and (\ref{eq:farkas}) sequentially.  
To enable guaranteed patching of CBF and CLF by Theorem \ref{thm:patch} using the explicit formula (\ref{eq:patched_W}), we also do a bisection search to compute $\eps\in(0,1)$ such that 
\begin{equation}
    \label{eq:farkas-band}
\begin{aligned}
& ((1-\eps\le h(x)\le 1)  \\
& \wedge (\lambda_1 \ge 0 \wedge \lambda_2 \ge 0 \wedge \lambda_1 + \lambda_2=1) \\
& \wedge (\lambda_1 L_g V(x)+\lambda_2 L_g h(x)=0))\\ 
& \Longrightarrow \lambda_1 L_f V(x) + \lambda_2 L_f h(x) <0.
\end{aligned}
\end{equation}
is verified. This implies strict compatibility of $h$ and $V$ on the band 
$$
\partial\mathcal C_{\varepsilon}^{-} := \{x : 1-\varepsilon \le h(x) \le 1\}.
$$
defined in the proof of Theorem \ref{thm:patch}. Upon successful verification of these inequalities, the single Lyapunov function $W$ defined by (\ref{eq:patched_W}) provides both a certificate and a means for computing provably safe, stabilizing controllers. In the next section, we demonstrate that this approach is effective and can provide less conservative safe stabilization regions, compared to alternative approaches. 
\end{rem}

\section{Numerical examples}\label{sec:examples}

We illustrate the effectiveness of the proposed approach with several nonlinear control examples. All computations were performed on a 2020 MacBook Pro with a 2 GHz quad-core Intel Core i5 processor and solved within the toolbox LyZNet \cite{liu2024tool}, supported by dReal \cite{gao2013dreal} as the verification engine. The code for all examples presented here, as well as additional examples, is available at 
\url{https://git.uwaterloo.ca/hybrid-systems-lab/lyznet} under \texttt{examples/patch-clbf}.

\begin{example}\label{ex:2d_toy_verified}
We first revisit Example~\ref{ex:2d_toy}. Upon formally verifying the softmax barrier $h=h_{\mathrm{sm}}$, defined in~(\ref{eq:softmax}) with $\tau=4.5$, as a strict barrier function using (\ref{eq:cbf-verification}), we also formally verify its strict compatibility with a quadratic CLF. The strict barrier condition was certified in less than $0.01$s, and strict compatibility in $0.42$s. The latter involved a bisection procedure to determine the largest $\varepsilon$ such that strict compatibility~(\ref{eq:farkas-band}) holds on the band $\set{x : 1-\varepsilon \le h(x) \le 1}$ (capped at $\varepsilon=0.5$). The CLF we verify is the quadratic function $V(x)=x_1^2+2x_2^2$. Figure~\ref{fig:2d_toy_verified} depicts the patched CLBF and the phase portrait of the closed-loop system under Sontag's controller \cite{sontag1989universal}. Figure~\ref{fig:2d_toy_verified_trajectories} shows 50 simulated trajectories from the set $\C=\set{h\le 1}$, along with the evaluation of $h$ to demonstrate safety. The barrier function $h$ itself cannot serve as a Lyapunov function on $\mathcal{C}$, not only because it takes negative values, but also because it has a stationary point at $x = (0.1294085, 0.94176161)^\top$ (numerically confirmed via gradient descent). Patching it with a quadratic CLF yields a provable CLBF, as demonstrated here.
\end{example}

\begin{figure}[h!]
    \centering
    \includegraphics[width=\linewidth]{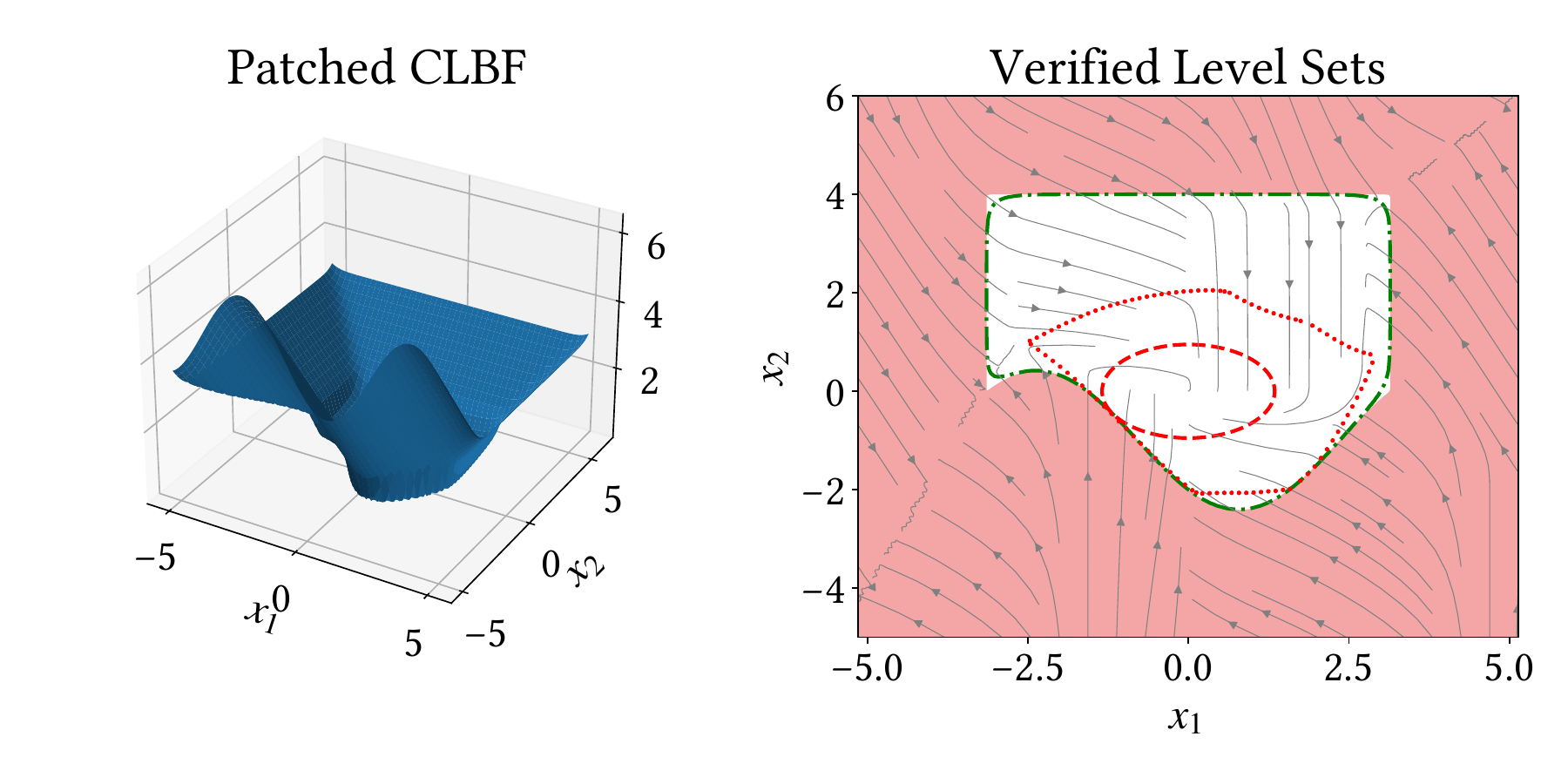}
    \caption{A smooth Lyapunov-barrier function patched from a strictly compatible CBF-CLF pair for Example \ref{ex:2d_toy_verified}. The green dash-dotted line represents the formally verified safe stabilization region, relative to the unsafe region (shaded in light red). For comparison, the best verified safe controllable region obtained using compatible SOS CBF-CLF~\cite{dai2024verification} is shown in dotted red, while the dashed red curve depicts the largest safe stabilization region certified by a quadratic CLF.}
    \label{fig:2d_toy_verified}
\end{figure}

\begin{figure}[h!]
    \centering
    \includegraphics[width=0.47\linewidth]{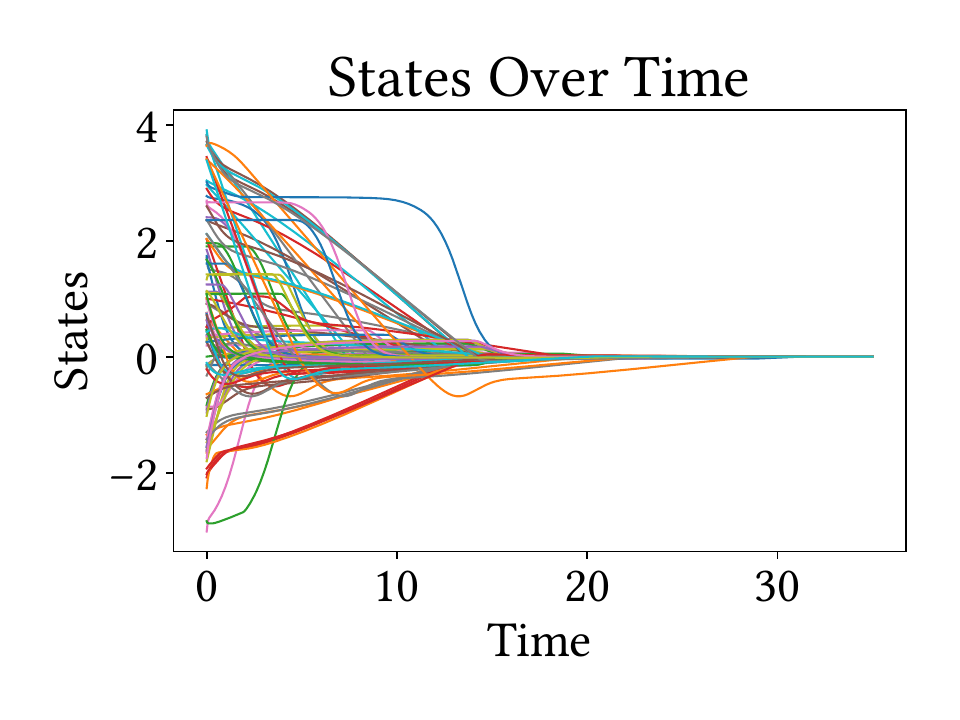}
    \includegraphics[width=0.47\linewidth]{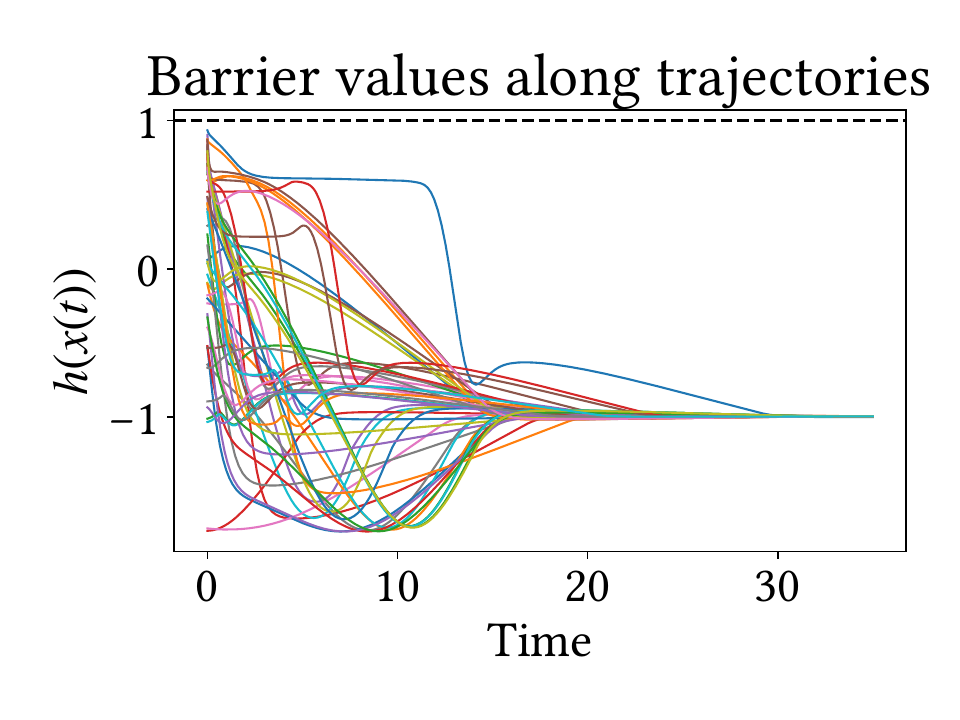}
    \caption{Simulated trajectories and corresponding barrier function values for Example \ref{ex:2d_toy_verified}. All trajectories converge, and the barrier function values remain below the safe threshold of 1.}
    \label{fig:2d_toy_verified_trajectories}
\end{figure}

\begin{example}\label{ex:2d_nonlinear_verified}
    Revisit Example \ref{ex:2d_nonlinear}. We are able to formally verify strict compatibility of the CBF $h$ depicted in Fig. \ref{fig:ex_refined_barrier} with the simple CLF $V(x) = x_1^2 + x_2^2$. The patched CLBF using (\ref{eq:patched_W}) and Theorem \ref{thm:patch} is shown in Fig. \ref{fig:2d_nonlinear_verified}, along with the phase portrait of the closed-loop system. 
\end{example}

\begin{figure}[h!]
    \centering
    \includegraphics[width=\linewidth]{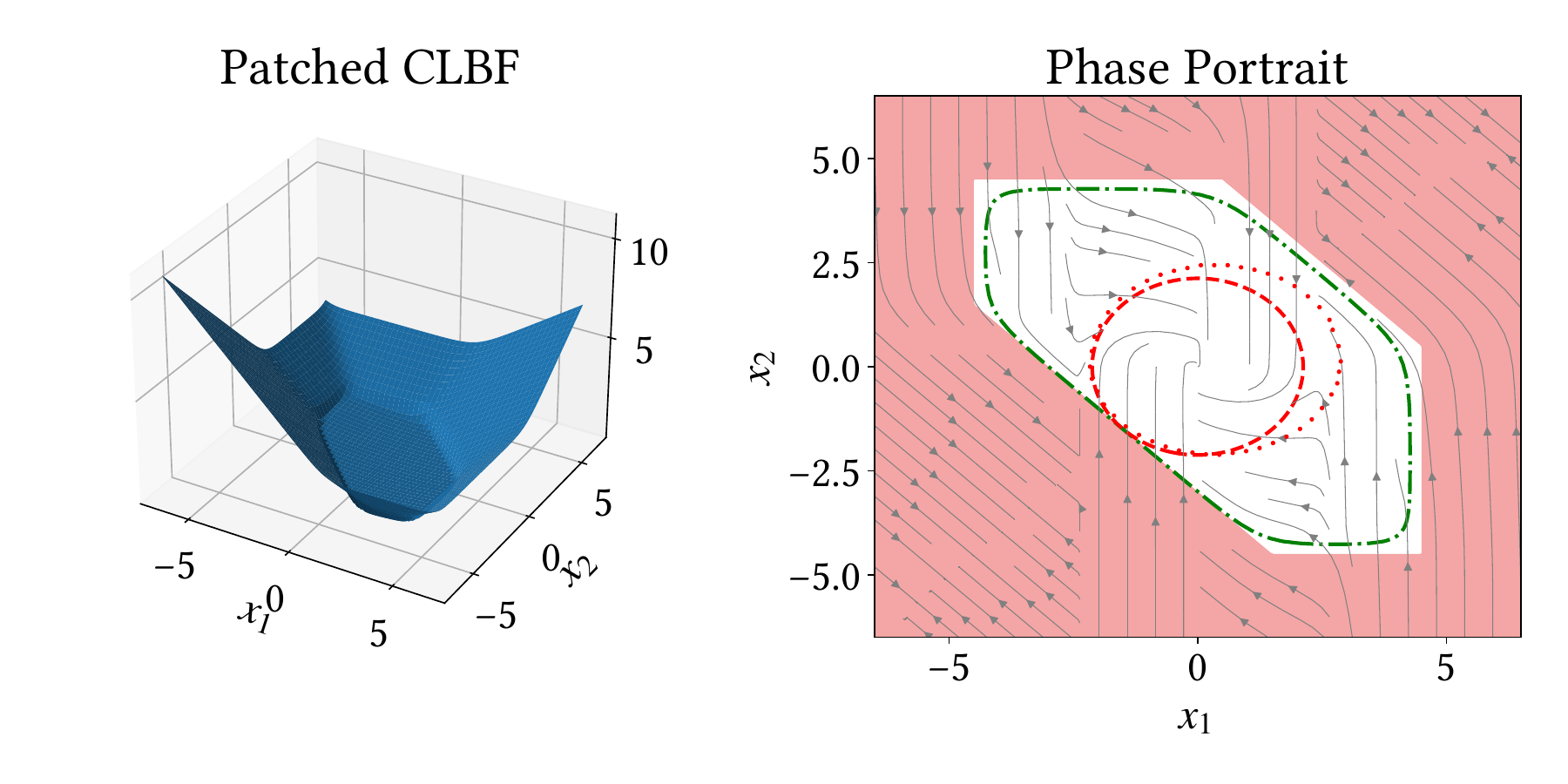}
    \caption{A smooth control Lyapunov-barrier function patched from a strictly compatible CBF-CLF pair for Example \ref{ex:2d_nonlinear_verified}. The green dash-dotted line represents the formally verified safe stabilization region, relative to the unsafe region (shaded in light red). For comparison, the best verified safe controllable region obtained using compatible SOS CBF-CLF~\cite{dai2024verificationv1} is shown in dotted red, while the dashed red curve depicts the largest safe stabilization region certified by a quadratic CLF.}
    \label{fig:2d_nonlinear_verified}
\end{figure}

\begin{example}\label{ex:2d_linear}
We consider the nonlinear control-affine system (\ref{eq:sys}), taken from \cite{dai2024verificationv1}, with 
\[
f(x)=\begin{bmatrix}x_1+x_2\\ -2x_1+3x_2 \end{bmatrix},
\quad
g(x)=\begin{bmatrix} 1 & 0 \\ 0 & 1 \end{bmatrix},
\]
and domain $[-10,10]^2$. The unsafe set is given by the single half-space $h_1(x):= -2 - x_1 - x_2 \le 1$, together with the box constraints defining the domain. We can verify strict compatibility of the softmax barrier function $h=h_{\mathrm{sm}}$ ($\tau=10.0$) with a quadratic CLF $V$. The resulting patched CLBF using (\ref{eq:patched_W}) and Theorem \ref{thm:patch} is depicted in Fig. \ref{fig:2d_linear}. 
\end{example}

\begin{figure}[h!]
    \centering
    \includegraphics[width=\linewidth]{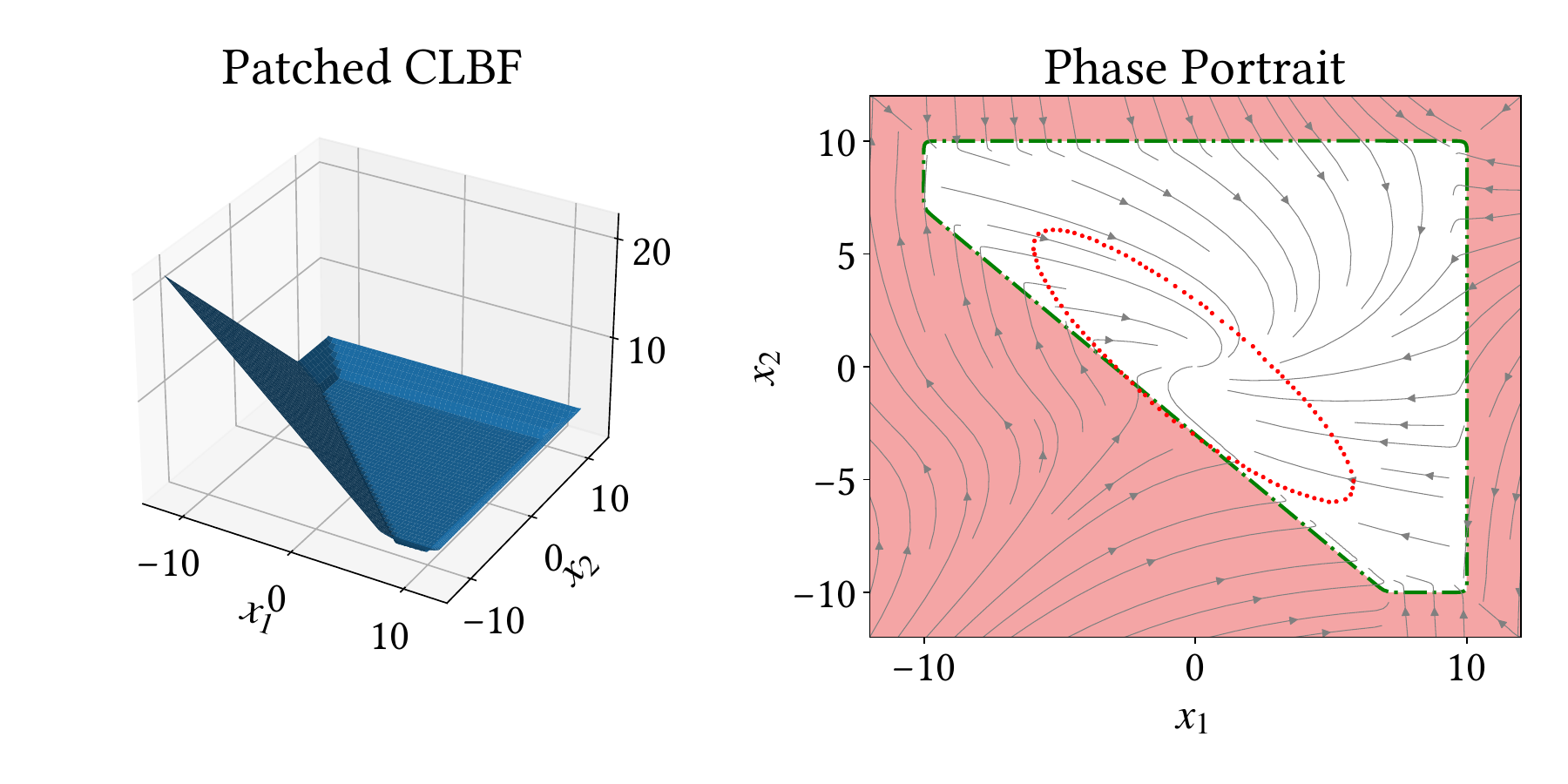}
    \caption{A smooth control Lyapunov-barrier function patched from a strictly compatible CBF-CLF pair for Example \ref{ex:2d_linear}. The green dash-dotted line represents the formally verified safe stabilization region, relative to the unsafe region (shaded in light red). For comparison, the best verified safe controllable region obtained using compatible SOS CBF-CLF~\cite{dai2024verificationv1} is shown in dotted red.}
    \label{fig:2d_linear}
\end{figure}

\begin{example}\label{ex:3d_power}
We consider the case study of a power converter, taken from \cite{schneeberger2024advanced} and also used in \cite{dai2024verificationv1}, which is a nonlinear control-affine system~\eqref{eq:sys} with
\[
f(x)=\begin{bmatrix}
-0.05x_1 - 57.9x_2 + 0.00919x_3\\[2pt]
1710x_1 + 314x_3\\[2pt]
-0.271x_1 - 314x_2
\end{bmatrix},
\]
\[
g(x)=\begin{bmatrix}
0.05 - 57.9x_2 & -57.9x_3\\[2pt]
1710 + 1710x_1 & 0\\[2pt]
0 & 1710 + 1710x_1
\end{bmatrix},
\]
on the domain $[-2,2]^3$. The unsafe set is specified by
\[
x_1 \le 0.2, 
\quad x_1 \ge -0.8, 
\quad (x_2-0.001)^2 + x_3^2 \le 1.2^2,
\]
together with the box constraints defining the domain. Here, the state vector $x=(x_1,x_2,x_3)^\top$ corresponds to the measured quantities $(v_{dc},i_d,i_q)^\top$, where $v_{dc}$ is the DC-link voltage and $i_d,i_q$ are the direct- and quadrature-axis currents in the synchronous $dq$-frame \cite{schneeberger2024advanced}. We construct a softmax barrier function $h=h_{\mathrm{sm}}$ with $\tau=3.1$ and verify strict compatibility with a quadratic CLF $V$ on the boundary band. The strict barrier condition was certified in less than $0.01$s, and strict compatibility in $0.59$s. The patched CLBF obtained from~\eqref{eq:patched_W} using Theorem~\ref{thm:patch} is depicted in Fig.~\ref{fig:3d_power}.
\end{example}

\begin{figure}[h!]
    \centering
    \includegraphics[width=\linewidth]{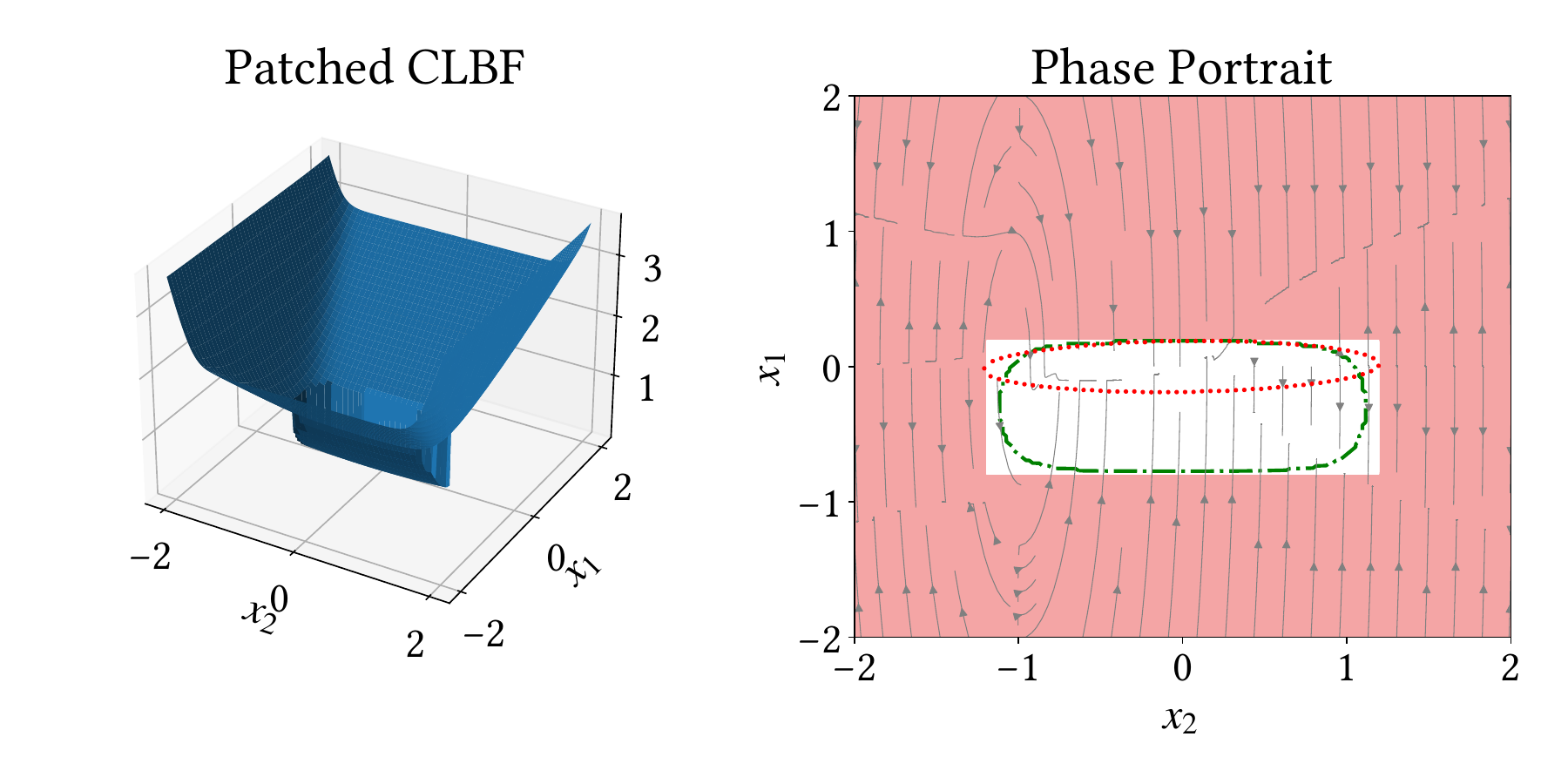}
    \caption{A patched control Lyapunov-barrier function for Example~\ref{ex:3d_power}. 
    The green dash-dot curve represents the formally verified safe stabilization region, 
    relative to the unsafe region (shaded in light red). For comparison, the best reported SOS CBF-CLF result \cite{dai2024verificationv1} is shown in dotted red.}
    \label{fig:3d_power}
\end{figure}

\section{Conclusions}

We developed a computational framework for constructing a single smooth Lyapunov function that certifies both stability and safety, enabled by softmax barrier relaxations, strict compatibility verification of CBF-CLF pairs, and explicitly defined smooth patching. Formal guarantees were established using $\delta$-complete SMT solvers, and the method was demonstrated on benchmark systems and compared with an alternative sum-of-squares approach. Future work will focus on incorporating input constraints and extending the approach with compositional verification techniques to scale to high-dimensional systems.

\bibliographystyle{plain}        %
\bibliography{acc26} 

\end{document}